  \theoremstyle{plain}
  \newtheorem{prop}{\protect\propositionname}
   \newenvironment{proof}[1][\proofname]{\par
     \normalfont\topsep6\p@\@plus6\p@\relax
     \trivlist
     \itemindent\parindent
     \item[\hskip\labelsep
           \scshape
       #1]\ignorespaces
   }{%
     \endtrivlist\@endpefalse
   }
   \providecommand{\proofname}{Proof}
\providecommand{\propositionname}{Proposition}
\begin{document}

\title{Simplicial Complex Entropy}

\author{
\parbox{0.45\textwidth}
{\centering
Stefan Dantchev\\
Durham University\\
School of Eng. \& Comp. Sciences\\
Durham, DH1 3LE, UK\\
s.s.dantchev@durham.ac.uk
}
\hspace{0.09\textwidth}
\parbox{0.45\textwidth}{\centering
Ioannis Ivrissimtzis\\
Durham University\\
School of Eng. \& Comp. Sciences\\
Durham, DH1 3LE, UK\\
ioannis.ivrissimtzis@durham.ac.uk
}
}

\date{}

\maketitle

\begin{abstract} 
We propose an entropy function for simplicial complices. Its value 
gives the expected cost of the optimal encoding of sequences of 
vertices of the complex, when any two vertices belonging to 
the same simplex are indistinguishable. We show that the proposed entropy 
function can be computed efficiently. By computing the entropy of 
several complices consisting of hundreds of simplices, we show that 
the proposed entropy function can be used in the analysis of the 
large sequences of simplicial complices that often appear in 
computational topology applications. 
\end{abstract}

\maketitle

\section{Introduction} 
\label{sec:introduction}

In several fields of visual computing, such as computer vision, 
CAD and graphics, many applications require the processing of an 
input in the form of a set of unorganized points, that is, a finite 
subset of a metric space, typically ${\bf R}^2$ or ${\bf R}^3$. Often, 
the first step in the processing pipeline is the construction of a 
simplicial complex, or a series of simplicial complices capturing 
spatial relations of the input points. Such geometrically constructed 
simplicial complices commonly used in practice include the 
{\em Vietoris-Rips} and {\em \v{C}ech} complices, see for example 
\cite{zomorodian10}, the {\em alpha shapes} \cite{edelsbrunner95} 
and the witness complices \cite{silva04,guibas07}. 

The two simplest constructions, giving the Vietoris-Rips and the 
\v{C}ech complices, emerged from studies in the field of algebraic 
topology. In the Vietoris-Rips construction, we connect two points 
with an edge if their distance is less than a fixed $\varepsilon$ 
and the simplices of the complex are the cliques the resulting graph. 
In the \v{C}ech construction, the simplices are the sets of vertices 
that lie inside a bounding sphere of radius $\varepsilon$. 

Notice that the complices constructed this way, apart from the input 
point set which gives their vertex set, also depend on the parameter 
$\varepsilon$. In applications where the goal is to extract topological 
information related to input point set, it is quite common to consider 
sequences of complices corresponding to different values of $\varepsilon$ 
and study the evolution of their topological properties as $\varepsilon$ 
varies \cite{Zomorodian05,johansson11}. Such investigations led to the 
development of the notion of {\em persistence}, in the form for example 
of persistent homology, as one of the main concepts in the field of 
computational topology \cite{Edelsbrunner00,silva07}. Indicative of the 
need for computational efficiency, persistent homology calculations 
based on millions of distinct complices from the same input point 
set are now common and thus, the efficient computations of such series 
of complices is an active research area \cite{zomorodian10,dantchev12}. 

In this paper, our aim it to use information theoretic tools to study 
sequences of geometrically constructed complices corresponding to 
different values of $\varepsilon$. In particular, we define an entropy 
function on simplicial complices; we show that it can be computed 
efficiently; and demonstrate that it can be used to find critical 
values of $\varepsilon$. Here, the value of $\varepsilon$ is seen 
as a measure of spatial resolution and thus, we interpret the simplices 
of the geometrically constructed complices as sets of indistinguishable 
points. 

The setting of our problem is very similar to one that gave rise to the 
concept of {\em graph entropy} \cite{korner73} and {\em hypergraph entropy} 
\cite{korner88}. There, a graph or a hypergraph describe indistinguishability 
relations between vertices and the sets of indistinguishable vertices are 
derived as the {\em independent sets} of the graph or hypergraph. In contrast, 
in our approach, the sets of indistinguishable vertices are readily given as 
the simplices of the complex. In the next section, immediately after 
introducing the proposed simplicial complex entropy, we discuss in more 
detail its relation to graph entropy.


\section{Simplicial complex entropy} 
\label{sec:entropy}

Let $V=\left\{ v_{1},\dots v_{n}\right\}$ be a point set consisting of $n$ 
vertices. An abstract simplicial complex $C$ over $V$ is given by its maximal 
simplices $C_{1},\dots C_{m}$. These are nonempty subsets of $V$ whose union 
is the entire $V$ and none of them is a subset of another.

We are also given a probability distribution $P$ over $V$, i.e. non-negative 
numbers $p_{1},\dots p_{n}$ and such that $\sum_{j=1}^{n}p_{j}=1$. 
Assuming that all points that belong to the same simplex $C_{i}$ for 
some $i$, $1\leq i\leq m$ are indistinguishable, we define the 
{\em simplicial complex entropy} as 
\begin{eqnarray}
H(C,P) = \min\sum_{j=1}^{n}p_{j}\log\frac{1}{{\displaystyle \sum_{i\in Simpl\left(j\right)}}q_{i}} & \mbox{s.t.}\label{eq:Entropy_optimisation}\\
\sum_{i=1}^{m}q_{i}=1\nonumber \\
q_{i}\geq0 & \mbox{} & 1\leq i\leq m.\nonumber 
\label{eq:entropy}
\end{eqnarray}
where $Simpl\left(j\right)$ denotes the set of simplices containing vertex $p_j$. 

The above simplicial complex entropy is similar to the graph entropy, defined 
over a graph $G$ with a probability distribution $P$ on its vertices, given by 
\begin{equation}
\min\sum_{j=1}^{n}p_{j}\log\frac{1}{a_i} 
\label{eq:graphEntropy}
\end{equation}
where the minimum is taken over all convex combinations of characteristics 
vectors {\bf a} of the independent sets of $G$, with $a_i$ denoting the 
$i$-th coordinate of such vectors. 

In its information theoretic interpretation, the graph entropy gives the 
expected number of bits per symbol required in an optimal encoding of the 
information coming from a source emitting vertices of $G$ under the probability 
distribution $P$, assuming that any two vertices are indistinguishable iff 
they are not connected with an edge \cite{simonyi95}. In other words, the 
independent sets of $G$ are the sets of mutually indistinguishable vertices. 
Similarly, the information theoretic interpretation of the proposed simplicial 
complex entropy is that of the expected bits per vertex required in an 
optimal encoding of the information coming from the same source, under the 
assumption that the sets of mutually indistinguishable vertices are exactly 
the simplices. 

The proposed simplicial complex entropy can be seen as a simplification of 
the graph entropy, which however is at least as general. Indeed, on a graph 
$G$ we can define a simplicial complex $C$ on the same vertex set as $G$ and 
its simplices being the independent sets of $G$. Then, the graph entropy of 
$G$ is the simplicial complex entropy of $C$. On the other hand, given a 
simplicial complex $C$ it is not immediately obvious how one can construct a 
graph $G$ such that the simplicial complex entropy of $C$ is the graph entropy 
of $G$. 

In an abstract context, the proposed simplification might seem quite 
arbitrary: instead of deriving the sets of indistinguishable vertices 
from the connectivity of a graph, we consider them given in the form 
of simplices. However, in the context of geometrically constructed 
simplicial complices embedded in a metric space, the simplices are the 
natural choice of sets of indistinguishable points for a given spatial 
resolution $\varepsilon$ and there is no need, or indeed an obvious 
way, to model the property of indistinguishability in terms of graph 
connectivity. One notable exception to this is the special case of 
Vietoris-Rips complices which we discuss next, aiming at further 
highlighting differences and similarities between simplicial entropy 
and graph entropy.


\subsection{Example: Vietoris-Rips simplicial complex entropy}

In the case of Vietoris Rips complices, there is a straightforward 
interpretation of the simplicial complex entropy as graph entropy. 
Indeed, assume a probability distribution $P$ on a set of vertices 
$V$ embedded in a metric space, and assume that two vertices,  
are indistiguishable if their distance is less than $\varepsilon$. 
The graph $G$ with its edges connecting pairs of distinguishable 
vertices is the complement of the underlying graph of the 
Vietoris-Rips complex constructed on $V$ for the same $\varepsilon$. 

It is easy to see that the independent sets of $G$ are exactly the 
simplices of the Vietoris-Rips complex and thus, the graph entropy 
of $G$ is the simplicial complex entropy of the Vietoris-Rips complex. 
Indeed, if there are no edges connecting points of a subset of $V$, 
it means that all distances between these points are less than 
$\varepsilon$, therefore they form a simplex of the Vietoris-Rips 
complex. 


The simplicial entropy of the Vietoris-Rips complices has a straightforward 
graph entropy interpretation because Vietoris-Rips complices are completely 
defined by their underlying graph. Indeed, their simplices are the cliques 
of the underlying graph. However, this is not generally the case for 
geometrically constructed complices, with the \v{C}ech complex being a 
notable counterexample. 

Indeed, consider as $V$ the three vertices of an equilateral triangle 
of edgelength 1, embedded in ${\bf R}^2$. Any pair of vertices 
corresponds to an edge of the triangle and has a minimum enclosing 
sphere of radius 1/2. The $V$ itself has a minimum enclosing sphere of 
radius $\sqrt{3}/3$. Thus, for any $1/2 \leq \varepsilon \leq \sqrt{3}/3$ 
all three edges of the triangle are simplices of the \v{C}ech complex, i.e. 
pair-wise indistinguishable, but the triangle itself is not a simplex of 
the \v{C}ech complex.

\section{Properties of simplicial complex entropy} 
\label{sec:properties}

Solving the entropy minimisation turns out to be computationally tractable. Let us denote
\[
S_{j}\left(q\right)\overset{\mbox{def}}{=}{\displaystyle \sum_{i\in Simpl\left(j\right)}}q_{i}
\]
and rewrite (\ref{eq:Entropy_optimisation}) as a maximisation problem
with an objective function
\begin{equation}
f\left(q\right)\overset{\mbox{def}}{=}\sum_{j=1}^{n}p_{j}\log S_{j}\left(q\right).\label{eq:Objective_maximum}
\end{equation}
We can immediately prove the following
\begin{prop}
\label{prop:convexity}The objective function (\ref{eq:Objective_maximum})
is concave. The sums $S_{j}\left(q\right)$ are unique (i.e. the same)
for all points $q$ where the maximum is attained, while the set of
all maxima is a polyhedron.\end{prop}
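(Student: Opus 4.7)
The plan is to exploit the concavity of $\log$ composed with linear functions. First, I would observe that each $S_j$ is a linear, hence concave, function of $q$, and that $\log$ is concave and increasing on $(0,\infty)$; composing gives that $q \mapsto \log S_j(q)$ is concave on its effective domain. Since $f$ is a nonnegative combination of such functions (with weights $p_j \geq 0$ summing to $1$), it is concave, which proves the first claim. The feasible region is the compact simplex $\{q \geq 0 : \sum_i q_i = 1\}$, and since the maximal simplices of $C$ cover $V$ one can always choose $q$ making every $S_j(q) > 0$, so the supremum is finite and attained.

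For the uniqueness of the $S_j(q)$ at optima, I would use the \emph{strict} concavity of $\log$. Suppose $q$ and $q'$ are both optimal but $S_{j_0}(q) \neq S_{j_0}(q')$ for some $j_0$ with $p_{j_0} > 0$. By linearity of $S_{j_0}$, the midpoint $\bar{q} = (q+q')/2$ satisfies $S_{j_0}(\bar{q}) = (S_{j_0}(q) + S_{j_0}(q'))/2$, so strict concavity of $\log$ gives a strict inequality in that term while ordinary concavity gives weak inequalities for all others. Weighting by $p_{j_0} > 0$ and summing yields $f(\bar{q}) > (f(q) + f(q'))/2$, contradicting optimality. Hence each $S_j$ with $p_j > 0$ takes a constant value $s_j^{\star}$ on the set of maximizers (indices with $p_j = 0$ do not enter the objective at all).

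To conclude that the maximizer set is a polyhedron, I would note that it equals the intersection of the feasible simplex with the finitely many affine hyperplanes $S_j(q) = s_j^{\star}$ for those $j$ with $p_j > 0$. The intersection of a polyhedron with finitely many affine hyperplanes is itself a polyhedron, which gives the final assertion.

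I do not foresee a real obstacle; the subtle point worth emphasising is the distinction between strict concavity of $\log$ at the scalar level, which forces uniqueness of the $S_j$-values, and the fact that $f$ itself need not be strictly concave as a function of $q$ (the linear maps $S_j$ may share a nontrivial common kernel), which is precisely what permits a nontrivial polyhedral set of optimisers rather than a single optimal $q$.
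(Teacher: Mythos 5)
Your proof is correct and follows essentially the same route as the paper's: concavity from composing $\log$ with the linear maps $S_j$, uniqueness of the $S_j$-values at optima via strict concavity of $\log$ applied at the midpoint of two putative optimizers, and polyhedrality from the resulting linear description of the optimal set. You are in fact slightly more careful than the paper, which overlooks that the strict-inequality argument only pins down $S_j$ for indices with $p_j>0$ (and you also note the attainment of the maximum on the compact feasible simplex, which the paper takes for granted).
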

\begin{proof}
Let $q'$ and $q''$ be two different feasible points. Clearly, the
point $q\overset{\mbox{def}}{=}\nicefrac{1}{2}\left(q'+q''\right)$
is also feasible and $S_{j}\left(q\right)=\nicefrac{1}{2}\left(S_{j}\left(q'\right)+S_{j}\left(q''\right)\right)$
for $1\leq j\leq n$. We then have
\begin{equation}
\log S_{j}\left(q\right)\geq\nicefrac{1}{2}\left(\log S_{j}\left(q'\right)+\log S_{j}\left(q''\right)\right),\label{eq:Log_is_concave}
\end{equation}
which proves the concavity of the objective function.

Imagine now that $q'$ and $q''$ are two (different) optimal points
(with $f\left(q'\right)=f\left(q''\right)$) and moreover there is
a $j$, $1\leq j\leq n$ such that $S_{j}\left(q'\right)\neq S_{j}\left(q''\right)$.
For that particular, (\ref{eq:Log_is_concave}) is a strict inequality
and after summing up all inequalities, we get
\[
f\left(q\right)>\nicefrac{1}{2}\left(f\left(q'\right)+f\left(q''\right)\right),
\]
which contradicts the optimality of both $q'$ and $q''$. Thus, the
sums $\log S_{j}\left(q\right)$ are unique over all optimal points
$q$.

Finally, if we denote these sums (at an optimum) by $s_{j}$, $1\leq j\leq n$,
we notice that the set of all optimal optimal points $q$ is fully
described by the following linear system:
\begin{eqnarray*}
{\displaystyle \sum_{i\in Simpl\left(j\right)}}q_{i}=s_{j} &  & 1\leq j\leq n\\
\sum_{i=1}^{m}q_{i}=1\\
q_{i}\geq0 & \mbox{} & 1\leq i\leq m.
\end{eqnarray*}

\end{proof}

Another useful characterisation of an optimal point $q$ is given
by
\begin{prop}
\label{prop:optimality-conditions}Any optimal point $q$ satisfies
the following ``polynomial complementarity'' system:
\begin{eqnarray*}
\sum_{j\in Pts\left(i\right)}\frac{p_{j}}{S_{j}\left(q\right)}\begin{cases}
=1 & \mbox{if }q_{i}>0\\
\leq1 & \mbox{if }q_{i}=0
\end{cases} &  & 1\leq i\leq m\\
\sum_{i=1}^{m}q_{i}=1\\
q_{i}\geq0 & \mbox{} & 1\leq i\leq m
\end{eqnarray*}
\end{prop}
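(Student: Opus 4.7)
The plan is to apply the Karush--Kuhn--Tucker (KKT) framework to the equivalent maximisation of the concave objective $f(q) = \sum_{j=1}^n p_j \log S_j(q)$ over the probability simplex $\sum_i q_i = 1$, $q_i \geq 0$. Concavity of $f$ is established by Proposition \ref{prop:convexity}, and together with convexity of the feasible region this makes KKT both necessary and sufficient for optimality wherever $f$ is finite and differentiable.

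First I would form the Lagrangian $L(q,\lambda,\mu) = f(q) - \lambda\bigl(\sum_i q_i - 1\bigr) + \sum_i \mu_i q_i$ with $\mu_i \geq 0$. Since $\partial S_j/\partial q_i = 1$ precisely when $i \in Simpl(j)$, equivalently $j \in Pts(i)$, and $0$ otherwise, stationarity reads $\sum_{j \in Pts(i)} p_j/S_j(q) - \lambda + \mu_i = 0$, while complementary slackness gives $\mu_i q_i = 0$. Combining these immediately yields the claimed dichotomy: $\sum_{j \in Pts(i)} p_j/S_j(q) = \lambda$ when $q_i > 0$, and $\sum_{j \in Pts(i)} p_j/S_j(q) \leq \lambda$ when $q_i = 0$, which is the stated system up to the value of $\lambda$.

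The remaining, and in my view only substantive, step is to identify $\lambda = 1$. For this I would multiply the stationarity equation by $q_i$, use $\mu_i q_i = 0$ to discard the slack term, and sum over $i$. Swapping the order of summation on the left and using $\sum_{i \in Simpl(j)} q_i = S_j(q)$ collapses the inner factor, leaving $\sum_j p_j = 1$ on the left and $\lambda \sum_i q_i = \lambda$ on the right.

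The one technical subtlety I expect to need care with is the applicability of KKT at points where $S_j(q) = 0$ for some $j$, since there $\log S_j(q) = -\infty$ and $f$ is not differentiable. This is harmless, however: the uniform choice $q_i = 1/m$ is feasible and has $S_j(q) > 0$ for every $j$ (because the maximal simplices cover $V$), so the optimum cannot lie in the $-\infty$ region; and indices $j$ with $p_j = 0$ contribute nothing and may be deleted from the objective at the outset. Hence the KKT derivation is valid at any optimal point.
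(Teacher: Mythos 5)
Your proposal follows essentially the same route as the paper: write down the KKT conditions for the concave maximisation, use complementary slackness to obtain the dichotomy, and identify $\lambda=1$ by multiplying the stationarity condition by $q_i$, summing over $i$, and swapping the order of summation so that $\sum_{i\in Simpl(j)}q_i$ collapses to $S_j(q)$ --- exactly the paper's inner-product computation $\left\langle q,\nabla f(q)\right\rangle=1=\lambda$. Your closing remark on why KKT applies despite the logarithm blowing up where some $S_j(q)=0$ is a small but genuine addition of rigour that the paper omits.
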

\begin{proof}
The gradient of the objective function, $\nabla f\left(q\right)$
is
\[
\left(\sum_{j\in Pts\left(1\right)}\nicefrac{p_{j}}{S_{j}\left(q\right)},\dots\sum_{j\in Pts\left(m\right)}\nicefrac{p_{j}}{S_{j}\left(q\right)}\right)^{T}.
\]
We start with Karush\textendash{}Kuhn\textendash{}Tucker conditions
(for the maximisation problem) that an optimal point $q$ should satisfy:

\begin{eqnarray}
\sum_{j\in Pts\left(i\right)}\frac{p_{j}}{S_{j}\left(q\right)}=\lambda-\mu_{i} &  & 1\leq i\leq m\label{eq:Partial_derivatives}\\
\sum_{i=1}^{m}q_{i}=1\label{eq:Equal_to_one}\\
q_{i},\mu_{i}\geq0\; q_{i}\mu_{i}=0 & \mbox{} & 1\leq i\leq m
\label{eq:Non_negative}
\end{eqnarray}
for some $\lambda$ and $\mu_{i}$, $1\leq i\leq m$. 

We first expand the inner product
\begin{eqnarray}
\left\langle q,\nabla f\left(q\right)\right\rangle = 
\sum_{i=1}^{m}q_{i}\sum_{j\in Pts\left(i\right)}\frac{p_{j}}{S_{j}\left(q\right)} = & & \\ 
= \sum_{j=1}^{n}\frac{p_{j}}{S_{j}\left(q\right)}\sum_{i\in Simpl\left(j\right)}q_{i}=\sum_{j=1}^{n}p_{j}=1. & & 
\end{eqnarray} 
On the other hand, from \ref{eq:Partial_derivatives}, \ref{eq:Equal_to_one}
and \ref{eq:Non_negative}, we get
\begin{eqnarray}
\sum_{i=1}^{m}q_{i}\sum_{j\in Pts\left(i\right)}\frac{p_{j}}{S_{j}\left(q\right)}
=\sum_{i=1}^{m}q_{i}\left(\lambda-\mu_{i}\right)= & & \\
= \lambda\sum_{i=1}^{m}q_{i}-\sum_{i=1}^{m}q_{i}\mu_{i}=\lambda, & & 
\end{eqnarray} 
and thus $\lambda=1$.
\end{proof}


\subsection{Defining the error} 
\label{sec:error}

The indistiguashability between points, as described by the simplicial complex, 
results into an error of a complex can be understood in terms of encoding and 
decoding points as follows.

The encoder gets point $j$ ($1\leq j\leq n$), which is produced
by a memoryless random source under distribution $p$. We will describe 
two encoding startegies, one randomised, which is the one we implemented, 
and an adversarial which should give higher error rates.

\subsubsection{Randomised encoder}

The randomised encoder produces one of the cells that contains $j$, under 
the distribution $\nicefrac{q_{i}}{S_{j}\left(q\right)}$ for all $i\in Simpl\left(j\right)$.
The overall probability of seeing cell $i$ as a result is
\begin{equation}
\sum_{j\in Pts\left(i\right)}p_{j}\frac{q_{i}}{S_{j}\left(q\right)}=q_{i}\sum_{j\in Pts\left(i\right)}\frac{p_{j}}{S_{j}\left(q\right)}=q_{i}\left(1-\mu_{i}\right)=q_{i}
\label{eq:errorRandomised}
\end{equation}
(where $\mu_{i}$ is as in the proof of proposition \ref{prop:optimality-conditions}
above and taking into account that $\lambda=1$) as expected.

The decoder sees a cell $i$ and its best guess (as to which point
actually produced it) is the one that has the biggest probability
(according to the distribution $p$). Thus the total gain is
\[
err = \sum_{i=1}^{m}q_{i}\frac{\max_{j\in Pts\left(i\right)}p_{j}}{\sum_{j\in Pts\left(i\right)}p_{j}}.
\]

\subsubsection{The adversarial encoder}

We can think of this encoding strategy as a game between the encoder and the
decoder, in which whenever the decoder sees a simplex $i$, he responds
with a guess (point) $j\in Simpl\left(i\right)$ according to probabilities
$r_{ij}$, $r_{ij}\geq0$ and such that
\[
\sum_{j\in Pts\left(i\right)}r_{ij}=1\;\mbox{ for every }1\leq i\leq m.
\]
These probabilities are known to the encoder, so if the source produced
a point $j$, the encoder minimises the gain of the decoder by picking
a cell $i$ that is $\arg\min_{j\in Simpl\left(i\right)}r_{ij}$.
In turn, the decoder tries to maximise their total expected gain as
\begin{eqnarray*}
\overline{err} =  \max\sum_{j=1}^{n}p_{j}r_{j} & \mbox{s.t.}\\
r_{ij}\geq r_{j} &  & 1\leq j\leq n\mbox{ and }i\in Simpl\left(j\right)\\
\sum_{j\in Pts\left(i\right)}r_{ij}=1 &  & 1\leq i\leq m\\
r_{ij}\geq0 &  & 1\leq j\leq n\mbox{ and }i\in Simpl\left(j\right)
\end{eqnarray*} 
\section{Examples} 
\label{sec:examples} 

The computation of the simplicial complex entropy and the error was implemented 
in Matlab. Apart from some code for input output operations and simplicial 
complex representation, {\em fmincon} and {\em linprog} were directly used 
to compute the entropy and the error, respectively. 

In all examples, we report the {\em normalised entropy}, that is, the 
simplicial complex entropy $H(C,P)$ divided by the entropy of the vertex set 
$V$ under the same probability distribution $P$. Instead of the randomised 
encoder error $err$ in Eq.~\ref{eq:errorRandomised}, we report the value 
$1-err$, which can be seen as the decoding accuracy rate and correlates 
nicely with the normalised entropy. The difference between these two values 
is also reported. 

\begin{figure}[t]
\centering
\includegraphics[width=0.24\textwidth]{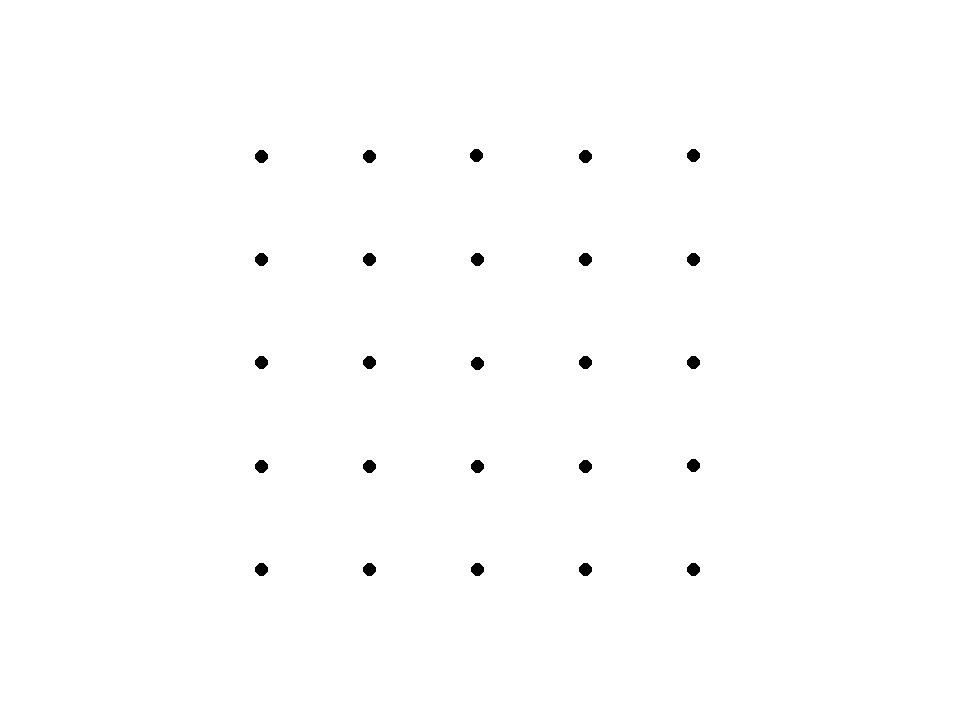} \hfill
\includegraphics[width=0.24\textwidth]{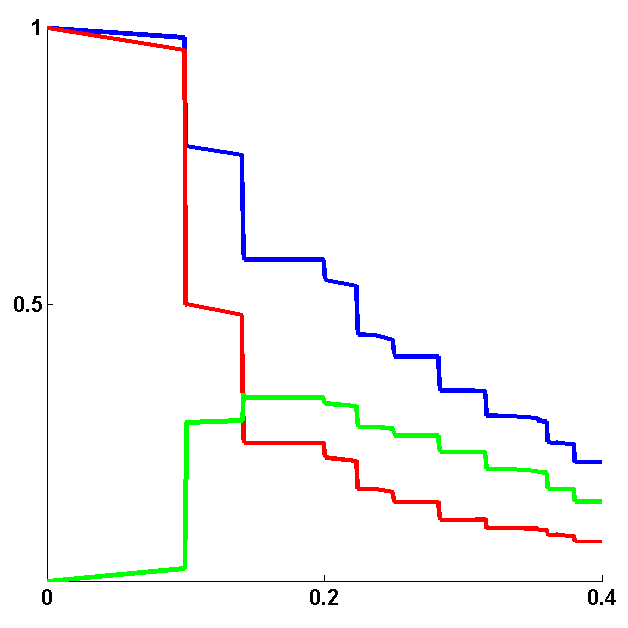} \hfill 
\includegraphics[width=0.24\textwidth]{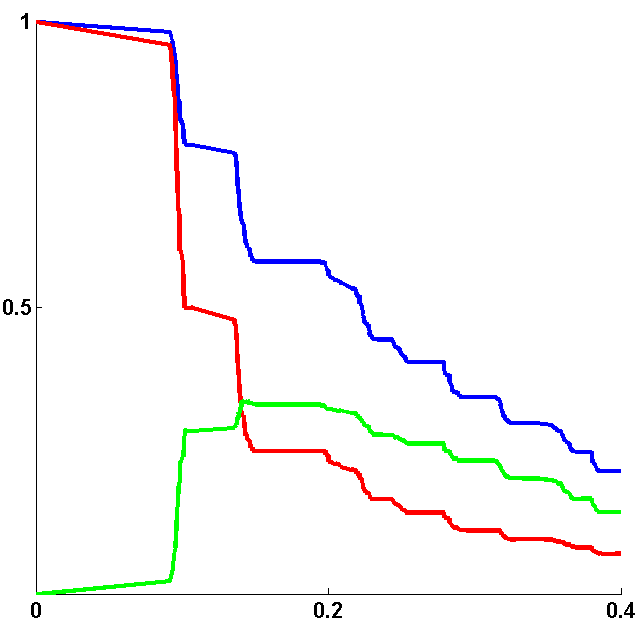} \hfill
\includegraphics[width=0.24\textwidth]{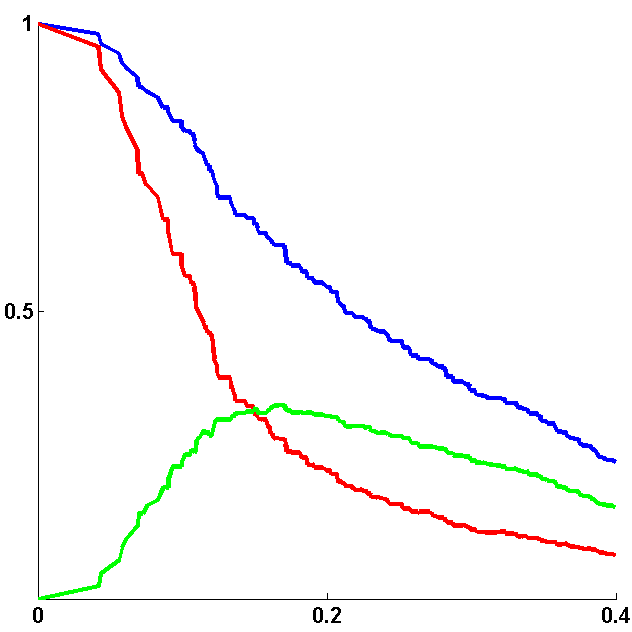} \vskip 0.1cm
\includegraphics[width=0.24\textwidth]{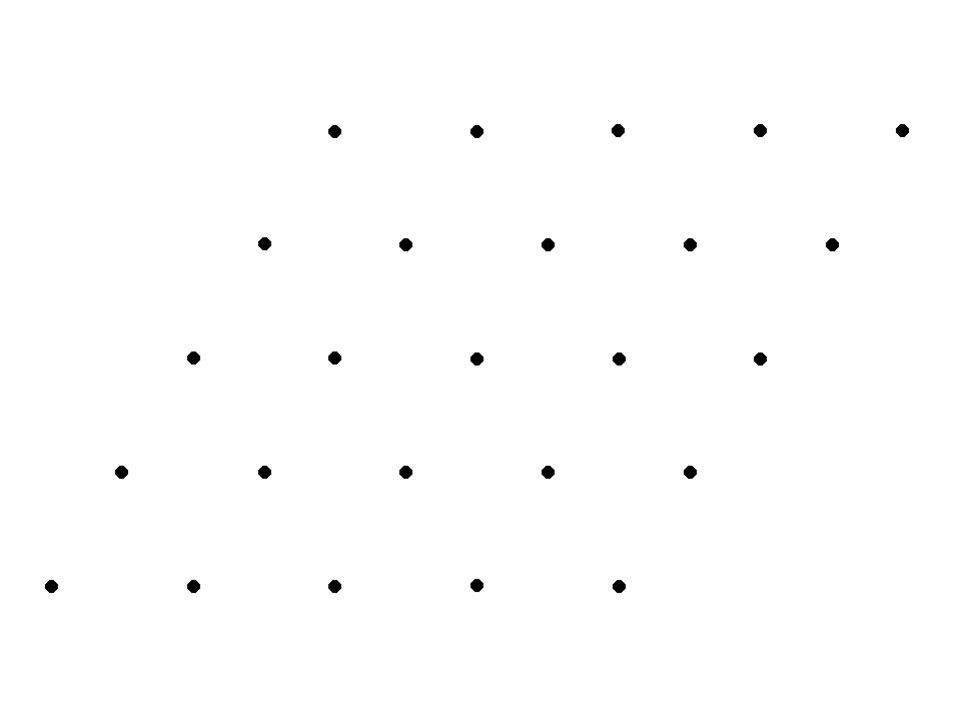} \hfill
\includegraphics[width=0.24\textwidth]{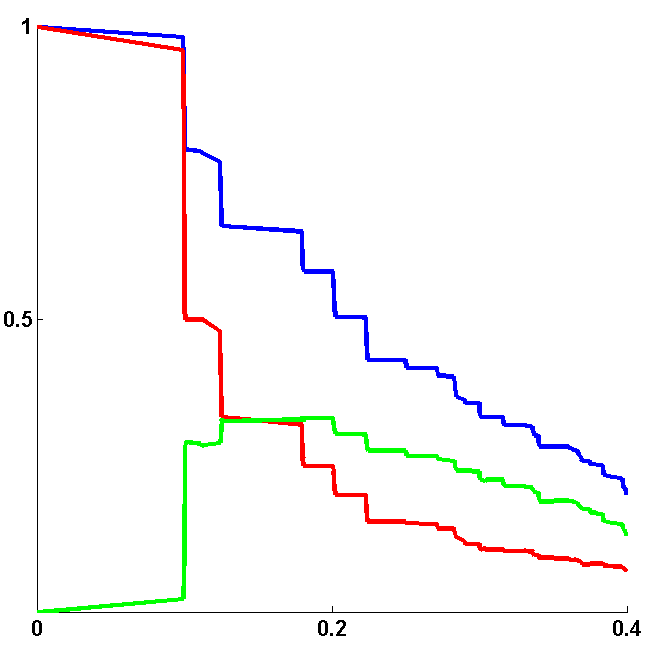} \hfill 
\includegraphics[width=0.24\textwidth]{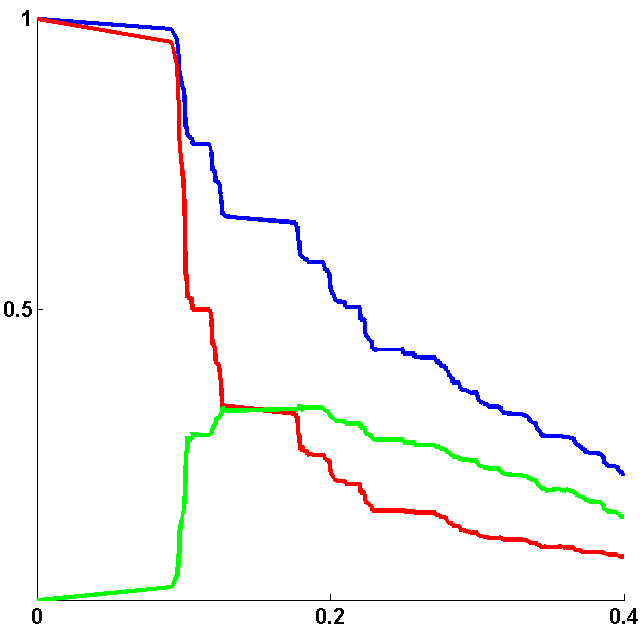} \hfill
\includegraphics[width=0.24\textwidth]{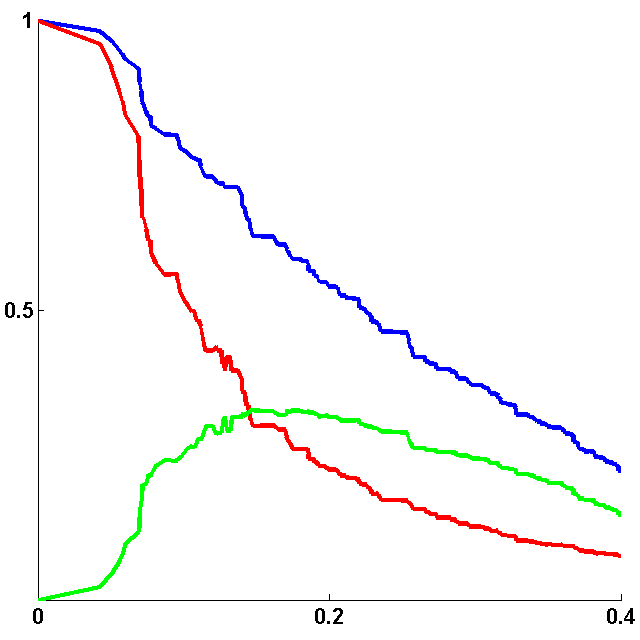} 
\caption{The $y$-axis represents the normalised entropy (blue curve), the accuracy rate (red curve) and their difference (green curve). The $x$-axis represents the parameter $\varepsilon$ (radius of the minimal enclosing sphere of a simplex) in the construction of the \v{C}ech complex. {\bf Top:} The input point set is the $5\times 5$ block of vertices of a square grid of edgelength 0.2 shown in the left. From left to right, uniform random noise $\pm 0.5\%, \pm 5\%$ and $\pm 50\%$ of the edgelength was added. The figures represent entropy and error computations on all possible \v{C}ech complices for that range of $\varepsilon$, that is, 768, 746 and 685 distinct complices, respectively. {\bf Bottom:} As per the top, but for triangular grid points. The figures correspond to 725, 694 and 672 distinct complices, respectively.}
\label{fig:latticeData}
\end{figure}

In a first example, Figure~\ref{fig:latticeData} (Top) shows the values of 
these two functions on \v{C}ech complices constructed from vertex sets that 
are nodes of square grid of edgelength 0.2 with some added noise. 
Figure~\ref{fig:latticeData} (Bottom) shows a similar example with the 
vertices originally being nodes of a triangular grid. In all cases, the 
probability distribution $P$ on the vertex set is uniform. 

In the case of a square grid without any added noise, as the values of the 
parameter $\varepsilon$ of the \v{C}ech complex construction parameter 
increase, they reach the first critical value at $\varepsilon = 0.1$, when 
edges, i.e. simplices of degree 2, are formed. The next critical value is 
$\varepsilon \simeq 0.141$, where the simplices of 
degree 4 are formed, and the next critical value is $\varepsilon = 0.2$ when 
simplices of degree 5 are formed. Similarly, the first critical values 
in the case of points from a triangular grid are $\varepsilon=0.1$, when 
simplices of degree 2 are formed and $\varepsilon \simeq 0.115$ 
when simplices of degree 3 are formed. 

These critical values are shown Figure~\ref{fig:latticeData} as sudden drops 
in the entropy of the \v{C}ech complices constructed on the less noisy data 
sets. We also notice simultaneous drops of the accuracy rates since they, as 
expected, correlate well with entropy. As the level of noise increases the 
critical points become less visible on either of these two curves. However, 
their difference, shown in green, seems to be more robust against noise, and 
moreover, seems to peak at a favorable place. That is, it peaks in values of 
$\varepsilon$ that would neither return a large number of non-connected 
components nor heavily overlapping simplices.


In the second example, the input set is a sample from the unit sphere in 
${\bf R}^3$. Figure~\ref{fig:sphericalData} (left) shows results from 
regular samples of size 50 (top) and 100 (bottom), computed in \cite{wales06} 
as solutions to the Thomson problem, with added uniform noise of $\pm 0.01$ units. 
In \cite{wales06}, the minimum distances between a point and its nearest neighbour 
are $\sim 0.5$ and $\sim $, respectively, and correspond to the steep entropy 
decreases and the half of these values when the first edges of the \v{C}ech complices 
are formed. Figure~\ref{fig:sphericalData} (right) shows results from random, 
area uniform samples of size 50 (top) and 100 (bottom). While the input is much less 
regular than at the left hand side of the figure, the peaks of the two green curves 
align well. 
\begin{figure}[t]
\centering
\includegraphics[width=0.24\columnwidth]{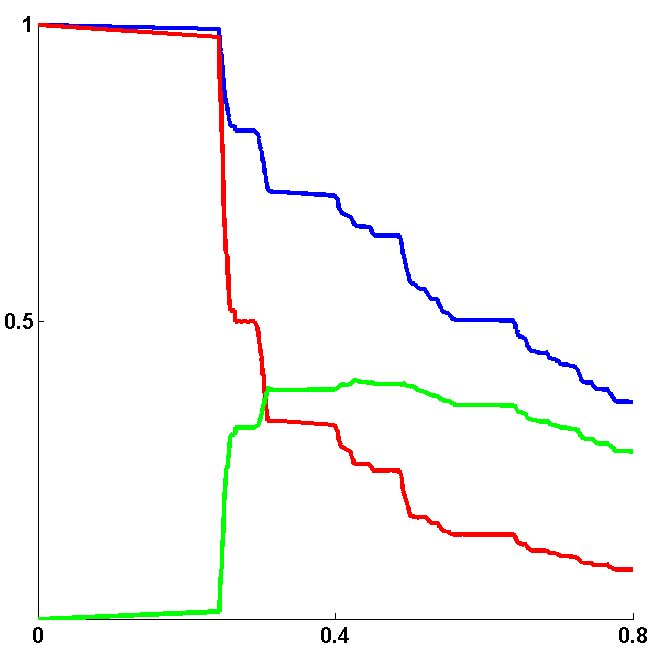} \hfill
\includegraphics[width=0.24\columnwidth]{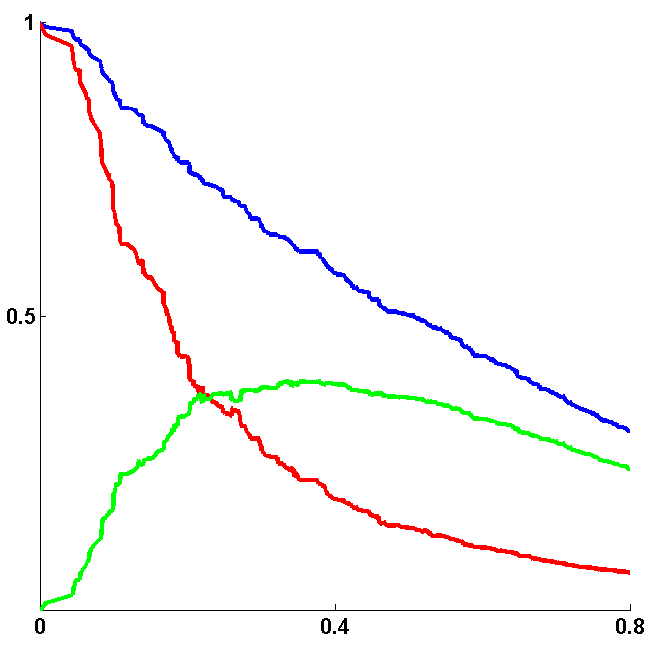} \hfill
\includegraphics[width=0.24\columnwidth]{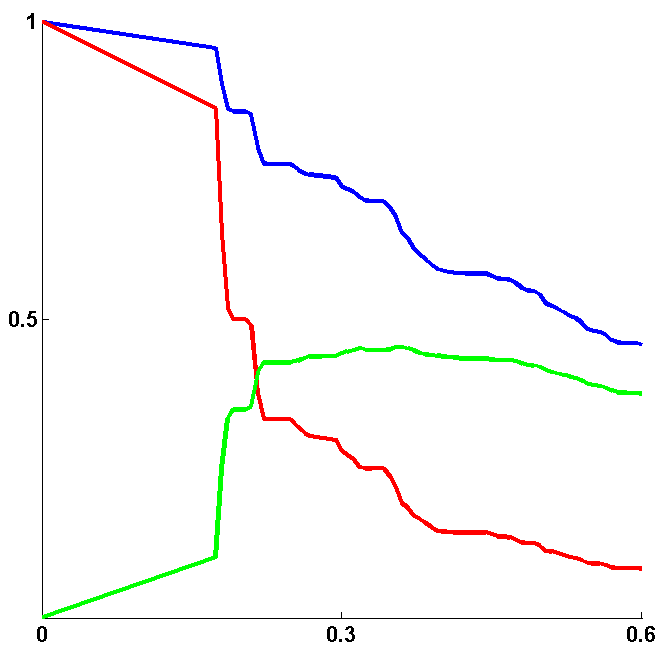} \hfill
\includegraphics[width=0.24\columnwidth]{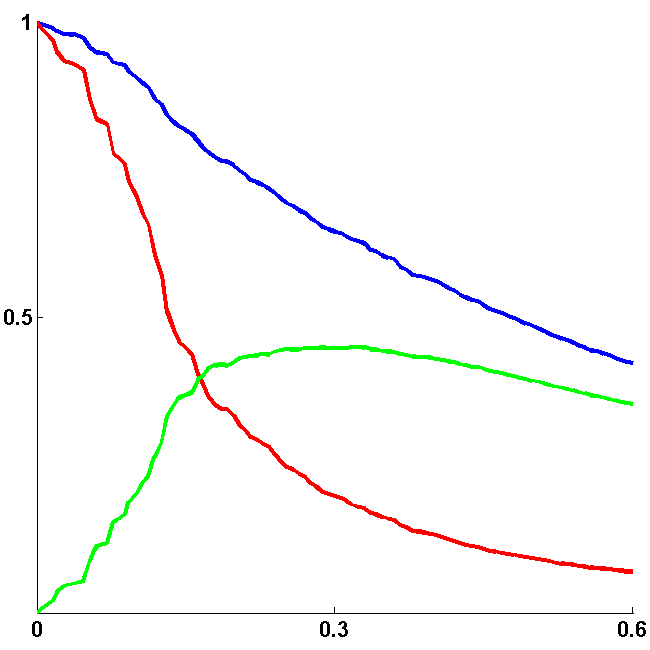} 
\caption{The axes and the colour of the curves are as per Figure~\ref{fig:latticeData}. {\bf Two left figures:} The input point set of size 50 is a computational solution to the Thomson problem with uniform noise of $\pm 0.01$ units added on each coordinate. In the right figure the input in an area uniform spherical random sample of the same size. The figures represent entropy and error computations from 2523 and 2661 distinct \v{C}ech complices, respectively. {\bf Two right figures:} As per the top, with point sets of size 100. Due to the very large number of distinct \v{C}ech complices, each figure represents 100 \v{C}ech complices, corresponding to a uniform sample of values of $\varepsilon$ in [0,0.6].} 
\label{fig:sphericalData}
\end{figure}


In a third example, we solve the optimisation problem for the 
computation of the entropy on triangle meshes and show the values 
of $q$, as in Eq.~\ref{eq:entropy}, color-mapped on the mesh triangles. 
In Figure~\ref{fig:triangleMeshes} (left), the probability distribution 
on the mesh vertices is uniform, as it was in all previous examples. 
On the right hand side of the figure, the probability distribution follows 
the absolute value of the discrete Gaussian curvature of the vertices. 
\begin{figure}[t]
\centering
\includegraphics[width=0.24\textwidth]{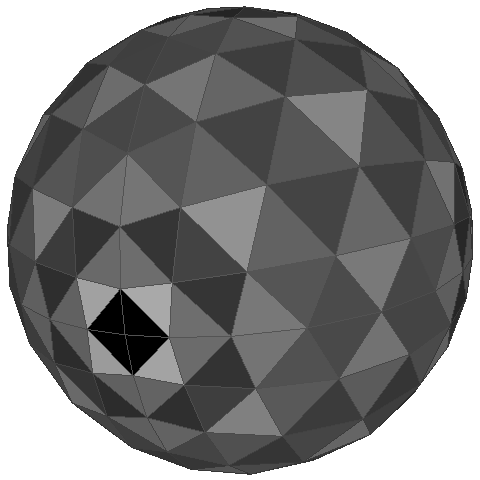} \hfill
\includegraphics[width=0.24\textwidth]{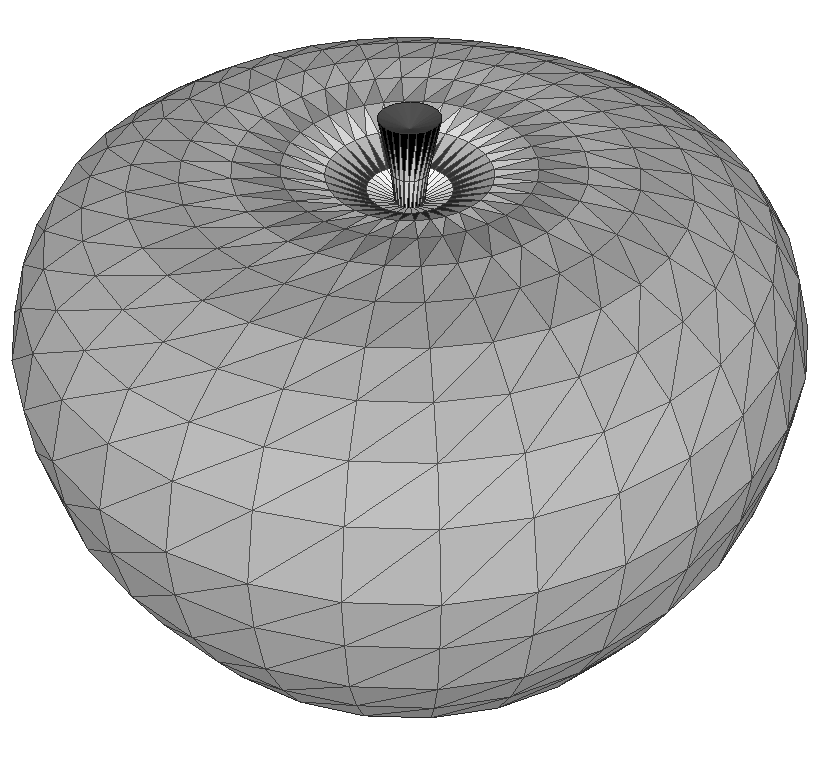} \hfill 
\includegraphics[width=0.24\textwidth]{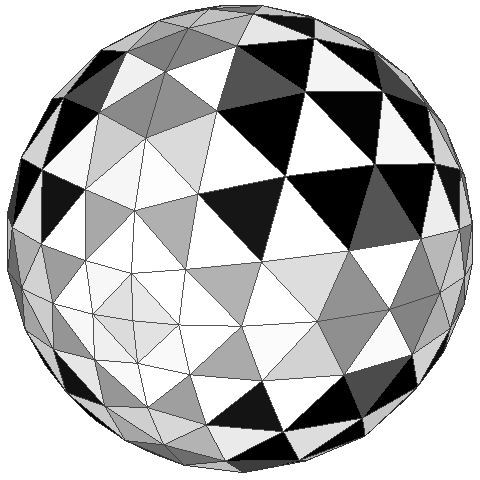} \hfill
\includegraphics[width=0.24\textwidth]{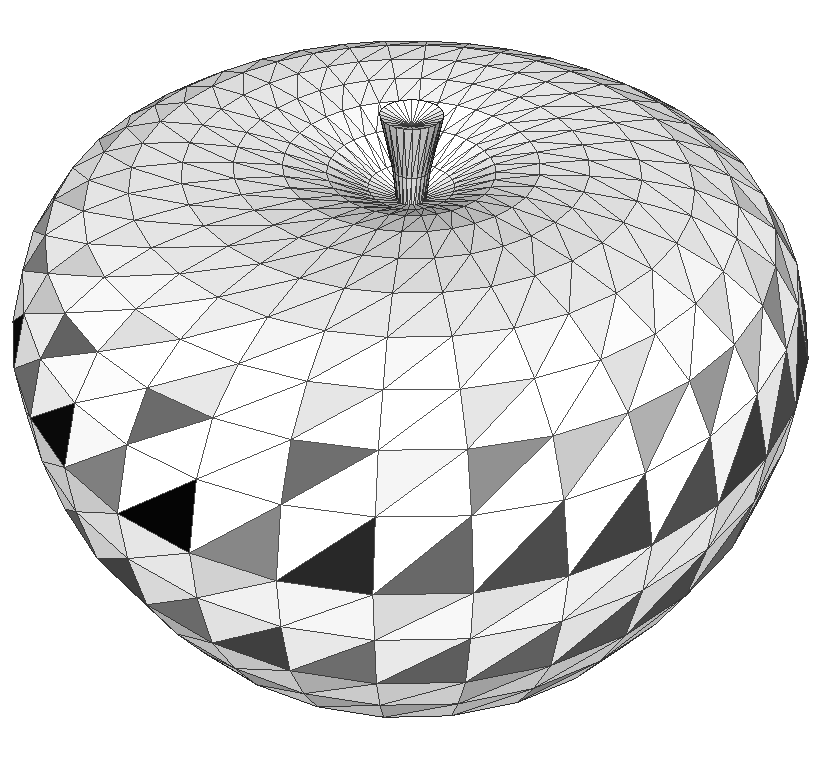} 
\caption{{\bf Two left figures:} The values $q_i$ in Eq.~1 are color-mapped on the 
mesh triangles. Darker colors correspond to higher values. {\bf Two right figures:} The probability 
distribution $P$ on the vertices corresponds to the absolute values of the discrete 
Gaussian curvature of the vertices. The two meshes consist of 512 and 1704 triangle, respectively.} 
\label{fig:triangleMeshes}
\end{figure} 
\section{Conclusion} 

We presented an entropy function for simplicial complices which can be seen 
as a simplification and generalisation of the graph entropy since all the 
maximal sets of indistiguishable points are exactly the maximal simplices of 
the complex and do not have to be computed as the independent sets of the graphs, 
which, generally, are difficult to characterise. We show that this simplification 
makes the simplicial complex entropy an function that can be efficiently computed. 

Even though the entropy is defined on abstract simplicial complexes, which are 
purely topological structures, in the examples we show that it can be relevant 
to geometric applications. For example, by computing the entropy of geometrically 
constructed simplicial complices, such as the {\em \v{C}ech} complices, or by 
using geometric properties of an embedded complex, such as a discrete curvature 
computed on the vertices to obtain a probability distribution on them. 

In the future we would like to study in more detail the function given as the 
difference between normalised entropy and the decoding accuracy rates, which 
seems to be a robust to noise descriptor of an appropriate level of geometric 
detail defined by the variable $\varepsilon$ of the {\em \v{C}ech} complex. We 
would also like to study the relationship between the error corresponding to 
a randomised encoder we used here and the error corresponding to an adversarial 
encoder as discussed at the end of Section~\ref{sec:error}.  

\bibliographystyle{IEEEtran}
\bibliography{ref}

\end{document}